\ifdef{\comma}{\renewcommand\comma{,\xspace}}{\newcommand\comma{,\xspace}}
\tikzset{,
  node distance=10mm,
  every place/.style=
    {
      circle,
      draw,
      thick,
      inner sep=3pt,
      minimum size=7mm
    },
  every transition/.style=
    {
      rectangle,
      draw,
      thick,
      inner sep=3pt,
      minimum size=7mm
    },
  edge/.style=
    {
      ->,
      shorten <=1pt,
      >=stealth',
      semithick
    },
  pre/.style=
    {
      <-,shorten <=1pt,>=stealth',semithick
    },
  post/.style=
    {
      ->,shorten >=1pt,>=stealth',semithick
    },
  state/.style=
    {
      circle,draw,semithick,inner sep=.1pt,minimum size=1.5mm,fill=black
    }
}
\newcommand\xdot{.\xspace}
\newcommand\eg{e.\,g\xdot,\xspace}
\newcommand\ie{i.\,e\xdot,\xspace}
\newcommand\cf{cf\xdot}
\newcommand\etal{et al\xdot}
\newcommand\wrt{w.\,r.\,t\xdot}
\newcommand\traces[1]{\ensuremath{\mathsf{Traces}(#1)}\xspace}
\newcommand\teq{\ensuremath{\equiv_{T}^{\infty}}\xspace}
\newcommand\feq{\ensuremath{\equiv_{F}^{\infty}}\xspace}
\newcommand\req{\ensuremath{\equiv_{R}^{\infty}}\xspace}
\newcommand\pfeq{\ensuremath{\equiv_{PF}^{\infty}}\xspace}
\newcommand\seq{\ensuremath{\equiv_{S}^{\infty}}\xspace}
\newcommand\xlongrightarrow[1]{\ensuremath{\overset{#1}{\longrightarrow}}\xspace}
\newcommand\parl{\ensuremath{\,|\,}\xspace}
\newcommand\transrel[1]{\ensuremath{\xlongrightarrow{#1}}\xspace}
\newcommand\bisim{\ensuremath{\equiv_B}\xspace}
\newcommand\failtraceeq{\ensuremath{\equiv_{FT}^{\infty}}\xspace}
\newcommand\readytraceeq{\ensuremath{\equiv_{RT}^{\infty}}\xspace}
\newcommand{\fairAtt}{\ensuremath{\mathcal{F}}\xspace}
\newcommand{\wFairOp}[1]{\ensuremath{\varphi_{#1}}\xspace}
\newcommand{\sFairOp}[1]{\ensuremath{\Phi_{#1}}\xspace}
\newcommand\sfairop[1]{\sFairOp{#1}}
\newcommand\wfairop[1]{\wFairOp{#1}}
\newcommand{\fairatt}{\fairAtt}
\newcommand{\sFairLang}[1]{\ensuremath{\mathcal{L}^\Phi_{#1}}\xspace}
\newcommand{\wFairLang}[1]{\ensuremath{\mathcal{L}^\varphi_{#1}}\xspace}
\newcommand\parll{\ensuremath{\,\|\,}\xspace}
\newcommand\runs{\ensuremath{\textit{Rn}}\xspace}
\newcommand\trace{\ensuremath{\textit{tr}}\xspace}
\theoremstyle{plain}
\newtheorem{theorem}{Theorem}
\theoremstyle{definition}
\newtheorem{definition}[theorem]{Definition}
  \edef\thetheorem{\expandafter\noexpand\thesection\@thmcountersep\@thmcounter{theorem}}
\def\titlerunning{Keep it Fair: Equivalences}
\title{\titlerunning
\footnote{This work was funded by the DFG (German Research Foundation), grants GO-671/6-2 and NE-1505/2-2.}
}
\author{Tobias Prehn 
\institute{Modelle und Theorie Verteilter Systeme\\ TU Berlin, Germany}
\email{tobias.prehn@tu-berlin.de}
\and Stephan Mennicke
\institute{Institut für Programmierung und Reaktive Systeme\\ TU Braunschweig, Germany}
\email{mennicke@ips.cs.tu-bs.de}}
\begin{document}
\maketitle

\begin{abstract}
For models of concurrent and distributed systems, it is important and also challenging to establish correctness in terms of safety and/or liveness properties.
Theories of distributed systems consider equivalences fundamental, since they (1) preserve desirable correctness characteristics and (2) often allow for component substitution making compositional reasoning feasible.
Modeling distributed systems often requires abstraction utilizing nondeterminism which induces unintended behaviors in terms of infinite executions with one nondeterministic choice being recurrently resolved, each time neglecting a single alternative.
These situations are considered unrealistic or highly improbable.
Fairness assumptions are commonly used to filter system behaviors, thereby distinguishing between realistic and unrealistic executions.
This allows for key arguments in correctness proofs of distributed systems, which would not be possible otherwise.
Our contribution is an equivalence spectrum in which fairness assumptions are preserved.
The identified equivalences allow for (compositional) reasoning about correctness incorporating fairness assumptions.
\end{abstract}

\section{Introduction} %
\label{sec:introduction}
In theories of concurrent and distributed systems, nondeterminism is central for describing system behavior.
Thereby, we highly abstract from implementation details in the decision-making processes to be realized, while focusing on all possible system behaviors.
A common abstraction therefore is nondeterminism, especially when it comes to the description of alternative behaviors.
Nondeterminism in system modeling falls apart into two categories.
{\em Internal nondeterminism} are those situations in which the observation of an action does not allow for the prediction of the successor state.
On the other hand, {\em external nondeterminism} are circumstances where the successor state may be determined, but it is impossible to predict or influence which one of two or more conflicting actions occurs, \eg the choice for one specific action may be controlled by an environment the system works in.
External nondeterminism is ubiquitous in distributed systems where usually two or more components contribute to the system behavior being modeled by an interleaving of the component actions.
For liveness properties (``something good eventually happens''), nondeterminism gives rise to system behaviors considered unrealistic.
The ``something good'' may be in reach infinitely many times but each time, the corresponding action may be evaded.
In this paper, we study the implications nondeterminism has on proofs for liveness properties.

In 1965, Edsger W.\ Dijkstra unleashed a controversial debate on system correctness proofs with his seminal paper on a problem in concurrent programming, today coined to the term {\em mutual exclusion}~\cite{Dijkstra65}.
Roughly a year later, Donald E.\ Knuth noticed and reported an error in Dijkstra's algorithm by showing an erroneous execution sequence, contradicting one aspect of the algorithm, and in conclusion, the proof of correctness as given by Dijkstra~\cite{Knuth66}.
The execution sequence Knuth took constitutes a corner case where one of two agents must be assumed to be indefinitely faster than the other component in order to constantly block the access to the critical section. 
Let us recapitulate the situation with a less complex example, given by the system depicted in Fig.~\ref{fig:intro-example}(a).
The system behavior of a process $q$ is modeled in a {\em labeled transition system} (LTS) containing $q$.
The use of LTS gives us a well studied, commonly used formalism and a wide applicability of our results since there exist numerous encodings from other formalisms into LTS and vice versa.
When considering the behavior of a process, the respective state may be seen as the initial state of the LTS.
One of two things may happen:
\begin{enumerate}
\item Either action $a$ occurs, leading to a state where action $c$ is momentarily disabled
\item or action $c$ occurs and afterwards action $a$ may occur infinitely often.
\end{enumerate}
For the sake of illustration, assume that action $a$ is performed by a component $K_1$ and the observation of action $c$ is due to component $K_2$.
The intuition is that $K_1$ uses some resource that $K_2$ needs to perform action $c$ and returns it every even occurrence of action $a$, as long as $c$ did not occur.
Assume that a system correctness property requires us to prove that action $c$ eventually occurs, \eg in order to ensure that $K_2$ eventually terminates.
Of course, we directly derive a counterexample by an infinite sequence of actions $a$, \ie $w=a^{\omega}$ disproves the desired theorem.

In the example, whenever a recurring decision is independent of outer circumstances, one option may continuously be selected in favor of another, leading to an unfair treatment of the neglected actions, and surely also of the neglected component.
This problem is often addressed by posing fairness assumptions, guaranteeing the elimination of the aforementioned unfair treatment of components and/or actions.
At this level of abstraction, it is simply unrealistic that one component, here $C_2$, is infinitely slower than another, when competing for a limited resource.
It is sufficient to assume that $C_2$ eventually gets a grip on the limited resource to perform action $c$, as long as an even number of actions $a$ occurs infinitely often.
With this assumption in mind, it is now possible to rule out the counterexample $w$.
All other infinite behaviors certainly contain action $c$ thus ensuring the required theorem.
Thus, a fairness assumption defines a certain balance between the theoretically possible and realistic system behavior.
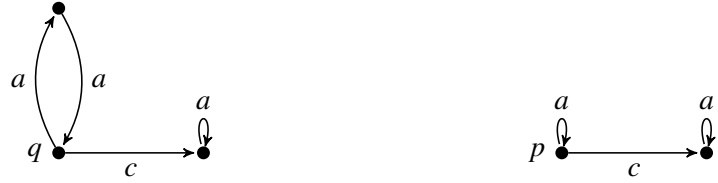
\begin{figure}[t]
\centering
\begin{subfigure}[b]{.4\textwidth}
\centering
\begin{tikzpicture}
  \node[state,label=left:$q$] (q0) {};
  \node[state] at (0,5em) {}
    edge[pre,bend right=30] node[auto,swap] {$a$} (q0)
    edge[post,bend left=30] node[auto] {$a$} (q0);
  \node[state] (q1) at (5em,0) {}
    edge[pre] node[auto] {$c$} (q0);
  \draw[edge,loop above] (q1) to node[auto] {$a$} (q1);
\end{tikzpicture}
\end{subfigure}\quad
\begin{subfigure}[b]{.4\textwidth}
\centering
\begin{tikzpicture}
  \node[state,label=left:$p$] (p0) {};
  \draw[edge,loop above] (p0) to node[auto] {$a$} (p0);
  \node[state] (p1) at (5em,0) {}
    edge[pre] node[auto] {$c$} (p0);
  \draw[edge,loop above] (p1) to node[auto] {$a$} (p1);
\end{tikzpicture}
\end{subfigure}
\caption{Two processes for which (a) strong fairness of $c$ and (b) weak fairness of $c$ ensures that eventually only actions $a$ are enabled.}\label{fig:intro-example}
\end{figure}

In practice, process creation involves specification and implementation, allowing for several design decisions of the same effect.
To identify systems that are capable of the same behavior and to match implementations against specifications, several equivalence notions are formulated, each meeting different requirements \wrt correctness, \eg deadlock-freedom.
Some of those equivalences form a hierarchy~\cite{Glabbeek1990Linear,Glabbeek1993} and many process logics, as \eg HML, LTL, or CTL, enjoy the property of {\em bisimulation invariance}~\cite{Hennessy85, Moller99}.
The latter constitutes an important insight in process theory, stating that whenever two systems are equivalent under bisimulation, then they enjoy the same properties as formulated in a certain logic.
Regardless of the strengths of the equivalences in van Glabbeek's spectrum, most of them do not consider system assumptions, such as fairness, stated outside of the used formalism.
However, in a system development process, when refining a system specification, such as the one given in Fig.~\ref{fig:intro-example}(a), to another system that is not bisimilar to the specification, it is unclear to what degree also the assumptions of fairness are carried over.
For example, trace equivalence is a candidate that changes the set of fair runs from one process to another equivalent process, being sketched in Sect.~\ref{sec:strong-fairness}.
This means that, even if we need to reprove certain properties for trace equivalent system refinements, the same fairness assumptions as posed upon the specification yields new paths to be taken into account.
We prove for all equivalences at least as strong as failure trace equivalence, that equivalent systems yield the same fair behavior.
On the other hand, equivalences at most as strong as ready equivalence do not preserve fairness in this respect.

We consider strong fairness as well as weak fairness, as introduced by Plotkin~\cite{Plotkin1982} or Francez~\cite{Francez1986}.
Formal definitions are adapted from notions defined by Reisig~\cite{Reisig1998}.
A system run is strong fair \wrt some action iff either this action occurs infinitely often or it is enabled only finitely often.
An action $a$ is {\em enabled} in a process if it is possible to execute $a$, thereby reaching another process.
In our example of Fig.~\ref{fig:intro-example}(a), assuming that action $c$, and thereby component $C_2$, is treated strong fair, ensures the success of the theorem.
Consider the same theorem for the system in Fig.~\ref{fig:intro-example}(b).
Here, also strong fairness for action $c$ is sufficient for proving the desired property.
In this case, strong fairness is not necessary, since also weak fairness suffices.
A system run is weak fair \wrt some action, here $c$, iff either it occurs infinitely often or it is not always enabled from some state within the run.
Since in Fig.~\ref{fig:intro-example}(b), process $p$ enables action $c$ and leaves it enabled after an action $a$ occurs, the assumption of weak fairness ensures that $c$ needs to be disabled in a weak fair run, again ruling out the canonical counterexample $w=a^{\omega}$.

The paper is structured as follows.
Sect.~\ref{sec:preliminaries} introduces the notion of labeled transition systems and a selection of semantic equivalences relevant to this work.
The notions of fairness are defined in Sect.~\ref{sec:strong-fairness} and Sect.~\ref{sec:weak-fairness}, for each of which an area within the linear-time branching-time spectrum is identified for which fairness is preserved.
We conclude the paper by a discussion on related work and plans for future work by Sect.~\ref{sec:related-work}.
\section{Preliminaries} %
\label{sec:preliminaries}
Here, we cover the definitions of labeled transition systems (LTS) as well as a selection of equivalences on LTS related and classified by van Glabbeek in \cite{Glabbeek1990Linear}.
LTS is a formalism to describe abstract behaviors while LTS equivalences relate such behaviors to one another.
An LTS is a state-transition graph where each transition is labeled by a letter from an alphabet $\Sigma$, the set of all (abstract) actions.
\begin{definition}[Labeled Transition System]\label{def:lts}
  Let $\Sigma$ be an alphabet.
  A {\em labeled transition system} (LTS) is a triple $A = ( Q, \Sigma, \transrel{})$ with a set of {\em processes} $Q$ and a labeled transition relation $\transrel{}\,\subseteq Q\times \Sigma\times Q$.
\end{definition}
Throughout the paper, processes range over $p, q, r, s, p', q'$ as well as $p_i$ and $q_i$ for $i \in \mathbb{N}$.
We abbreviate $(p, a, p')\in\,\transrel{}$ to $p\transrel{a} p'$.
$p\transrel{a}$ denotes enabledness of action $a$ in process $p$, \ie there exists $p' \in Q$ such that $p\transrel{a} p'$.
For a transition $t=(p,a,p')\in\,\transrel{}$, $l(t) := a$ defines its label projection.
Fairness notions are based on \emph{runs} of processes.
A run of a process $p_0$ is an alternating sequence of processes and actions, initiated by $p_0$.
Runs come in finite or infinite length.
Traces of infinite runs are elements of $\Sigma^{\omega}$, \ie all infinite words over $\Sigma$.
\begin{definition}[Process Run]\label{def:lts-runs}
Let $A = (Q, \Sigma, \transrel{})$ be an LTS.
A {\em finite run of $p_0 \in Q$} is a sequence of the form $\rho = p_0 a_1 \ldots p_{n-1} a_n p_n$ with $p_i\in Q$ and $a_i\in\Sigma$ such that $p_{i}\transrel{a_{i+1}} p_{i+1}$ ($0\leq i<n$).
An {\em infinite run of $p_0 \in Q$} is an infinite sequence of the form $\rho = p_0 a_1 p_1 a_2 \ldots$ with $p_i\in Q$ and $a_i\in\Sigma$ such that $p_{i}\transrel{a_{i+1}} p_{i+1}$ ($0 \leq i$).
  The set of all finite and infinite runs of $p_0$ is denoted by $\runs(p_0)$.
\end{definition}
Subsequently, we introduce eight different semantics on processes, formally.
Therefore, we assume a single LTS $A=(Q,\Sigma,\transrel{})$ containing all processes mentioned throughout the following paragraphs.
All definitions carry over to a setting where each process is identified by a single LTS with distinct initial state.
As fairness assumptions deal with infinite runs only, we use the infinitary versions of trace-based semantics.
Trace semantics simply enumerates all finite and infinite action sequences of a process.
Two processes are equivalent if their sets of action sequences are equal.
\begin{definition}[Trace Semantics~\cite{Glabbeek1990Linear}]\label{def:traceeq}
  Let $A = (Q, \Sigma, \transrel{})$ be an LTS and $\rho = p_0 a_1 \ldots p_{n-1} a_n p_n$ be a run of $p_0$.
  The {\em trace of $\rho$} is defined by $\trace(\rho) := a_1 a_2 \ldots a_n\in\Sigma^{*}$.
  The transition relation of LTS naturally extends to traces $\sigma=a_1 a_2 \ldots a_n\in\Sigma^{*}$ as $p_0 \xlongrightarrow{\sigma} p_n$.
  The set of all finite traces of $p_0$ is denoted by $\textit{Traces}(p_0)$.
  
  Let $\rho' = p_0 a_1 p_1 a_2 \ldots$ be an infinite run of $p_0$.
  The {\em trace of $\rho'$} is defined by $\trace(\rho') := a_1 a_2 \ldots\in\Sigma^{\omega}$.
  The set of all infinite traces is defined by $\textit{Traces}^{\infty}(p_0)$.
  Two processes $p_0, q_0 \in Q$ are {\em infinitary trace equivalent}, denoted by $p_0 \teq q_0$ iff $\textit{Traces}(p_0)=\textit{Traces}(q_0)$ and $\textit{Traces}^{\infty}(p_0)=\textit{Traces}^{\infty}(q_0)$.
\end{definition}
Trace semantics is a linear-time semantics not distinguishing between systems producing the same traces via intermediate processes capable of different actions.
Ready semantics respects the branching structure by combining traces of a process with all actions enabled by a process reached by this trace, the so-called ready set. Therefore, ready sets can be seen as system logs for a specific point in time, stating the actions that have been taken so far and the actions being possible at this point. 
\begin{definition}[Ready Semantics~\cite{Glabbeek1990Linear}]\label{def:readyeq}
  Let $A = (Q, \Sigma, \transrel{})$ be an LTS.
  A {\em ready pair of $p_0 \in Q$} is a pair $(\sigma,X)\in \Sigma^{*}\times 2^{\Sigma}$ such that there exists a $p\in Q$ with $p_0\xlongrightarrow{\sigma} p$ and $X = \{ a\in\Sigma \parl p\transrel{a} \}$.
  The set of all ready pairs of $p_0$ is denoted by $R(p_0)$.
  Two processes $p_0, q_0 \in Q$ are {\em infinitary ready equivalent}, denoted $p_0 \req q_0$ iff $R(p_0)=R(q_0)$ and $p_0 \equiv_{T}^{\infty} q_0$.
\end{definition}
Although the infinitary versions of ready semantics respects infinite runs to some degree, branching is only considered to a finite extent.
Ready trace semantics, on the other hand, removes this restriction by integrating ready sets directly into finite and infinite traces.
In comparison to ready pairs, the ready traces can be seen as system logs where not only the taken actions are represented but also the alternative, enabled actions that have been neglected. This is achieved by allowing for (but not enforcing) intermediate ready sets.
\begin{definition}[Ready Trace Semantics~\cite{Glabbeek1990Linear}]\label{def:readytraceeq}
  A word $\sigma_1 \sigma_2 \ldots \sigma_n \in(\Sigma\cup 2^{\Sigma})^{*}$ is a {\em ready trace of $p_0 \in Q$} iff 
  there are states $p_1, p_2, \ldots, p_n \in Q$ such that for each $ 0 \leq i < n$ either
  \begin{itemize}
  \item[(a)] $\sigma_{i+1} \in \Sigma$ and $p_i \transrel{\sigma_{i+1}} p_{i+1}$ or
  \item[(b)] $\sigma_{i+1} \in 2^{\Sigma}$ and $p_i = p_{i+1}$ and $\sigma_{i+1} = \{ a\in\Sigma \parl p_i\transrel{a} \}$.
  \end{itemize}
  The set of all ready traces of $p_0$ is denoted by $RT(p_0)$.
  An {\em infinite ready trace of $p_0 \in Q$} is a word $\sigma_1\sigma_2 \ldots \in (\Sigma\cup 2^{\Sigma})^{\omega}$ iff there are states $p_1, p_2, \ldots \in Q$ such that for each $ 0 \leq i$ either (a) or (b) holds. 
  By $RT^{\infty}(p_0)$ we denote the set of all infinite ready traces of $p_0$.
  Two processes $p_0, q_0 \in Q$ are {\em infinitary ready trace equivalent}, denoted $p_0\readytraceeq q_0$ iff $RT(p_0)=RT(q_0)$ and $RT^{\infty}(p_0)=RT^{\infty}(q_0)$.
\end{definition}
A natural counterpart of ready semantics is failures semantics.
Its central idea relies on {\em failure pairs}, each consisting of a finite trace and a (not necessarily maximal) set of actions refused by a process reachable via the trace.
Failures semantics can be seen as system logs similar to ready semantics. They also state the actions taken up to some point in time but close with a set of impossible actions.
\begin{definition}[Failures Semantics~\cite{Glabbeek1990Linear}]\label{def:faileq}
  Let $A = ( Q, \Sigma, \transrel{})$ be an LTS.
  A {\em failure pair of $p_0 \in Q$} is a pair $(\sigma,X)\in \Sigma^{*}\times 2^{\Sigma}$ such that there exists a $p\in Q$ with $p_0\xlongrightarrow{\sigma} p$ and $X \subseteq \{ a\in\Sigma \parl p\not\transrel{a} \}$.
  The set of all failure pairs of $p_0$ is denoted by $F(p_0)$.
  Two processes $p_0, q_0 \in Q$ are {\em infinitary failures equivalent}, denoted $p_0 \feq q_0$ iff $F(p_0)=F(q_0)$ and $p_0 \equiv_{T}^{\infty} q_0$.
\end{definition}

\begin{figure}[t]
\centering
\scalebox{1}{\begin{tikzpicture}[font=\small]

\node (sbisi) at (-8em,0) {$\bisim$};
\node (rtrcs) at (0,0) {$\equiv_{RT}^{\infty}$};
\node (sbisi2rtrcs) at (-4em,0) {$\subset$};
\draw[semithick] (sbisi) to (sbisi2rtrcs) to (rtrcs);

\node (fail) at (8em,0) {$\equiv_F^{\infty}$};
\node (trcs) at (16em,0) {$\equiv_T^{\infty}$};
\node (fail2trcs) at (12em,0) {$\subset$};

\draw[semithick] (fail) to (fail2trcs) to (trcs);

\node (ftrcs) at (4em,-4em) {$\equiv_{FT}^{\infty}$};
\node (ready) at (4em,4em) {$\equiv_{R}^{\infty}$};

\node[rotate=-45] (fail2ready) at (6em,2em) {$\subset$};
\draw[semithick] (fail) to (fail2ready) to (ready);

\node[rotate=45] (ready2rtrcs) at (2em,2em) {$\subset$};
\draw[semithick] (ready) to (ready2rtrcs) to (rtrcs);

\node[rotate=-45] (ftrcs2rtrcs) at (2em,-2em) {$\subset$};
\draw[semithick] (ftrcs) to (ftrcs2rtrcs) to (rtrcs);

\node[rotate=45] (fail2ftrcs) at (6em,-2em) {$\subset$};
\draw[semithick] (fail) to (fail2ftrcs) to (ftrcs);

\node (pf) at (0em,8em) {$\equiv_{PF}^{\infty}$};
\node[rotate=-45] (pf2ready) at (2em,6em) {$\subset$};
\draw[semithick] (pf) to (pf2ready) to (ready);
\node[rotate=45] (sbisi2pf) at (-4em,4em) {$\subset$};
\draw[semithick] (sbisi) to (sbisi2pf) to (pf);

\node (s) at (8em,-8em) {$\equiv_{S}^{\infty}$};
\node[rotate=45] (s2trcs) at (12em,-4em) {$\subset$};
\draw[semithick] (s) to (s2trcs) to (trcs);
\node[rotate=-33.69] (sbisi2s) at (0em,-4em) {$\subset$};
\draw[semithick] (sbisi) to (sbisi2s) to (s);

\draw[thick] (14em,9em) node[below left, align=right] {\scshape{\scriptsize Branching}\\ \scshape{\scriptsize Time}} -- (14em,-9em) node[above right, align=left] {\scshape{\scriptsize Linear} \\ \scshape{\scriptsize Time}}; %

\draw[thick,dashed] (-2em,9em) node[below left, align=right] {\scshape{Strong \& Weak}\\ \scshape{Fairness Preserving}} -- (-2em,4.2em) -- (6em,-3.4em) -- (6em,-9em); %

\end{tikzpicture}}
\caption{How the semantics relate to each other, adapted from \cite{Glabbeek1990Linear}.}
\label{fig:semeq-hier}
\end{figure}
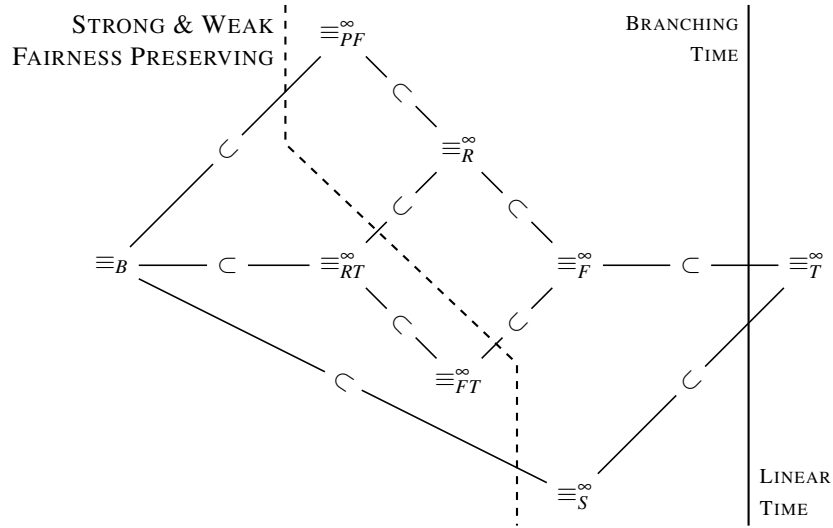
In contrast to the sets of all enabled actions that are integrated into the ready traces, the sets integrated into failure traces contain some of the refused actions.
\begin{definition}[Failure Trace Semantics~\cite{Glabbeek1990Linear}]\label{def:failtraceeq}
  A  word $\sigma_1 \sigma_2 \ldots \sigma_n \in(\Sigma\cup 2^{\Sigma})^{*}$ is a {\em failure trace of $p_0 \in Q$} iff there are states $p_1, p_2, \ldots, p_n \in Q$ such that for each $ 0 \leq i < n$ either
  \begin{itemize}
  \item[(a)] $\sigma_{i+1} \in \Sigma$ and $p_i \transrel{\sigma_{i+1}} p_{i+1}$ or
  \item[(b)] $\sigma_{i+1} \in 2^{\Sigma}$ and $p_i = p_{i+1}$ and $\sigma_{i+1} \subseteq \{ a\in\Sigma \parl p_i\not\transrel{a} \}$.
  \end{itemize}
  The set of all failure traces of $p_0$ is denoted by $FT(p_0)$.
  An {\em infinite failure trace of $p_0$} is a word $\sigma_1\sigma_2 \ldots \in (\Sigma\cup 2^{\Sigma})^{\omega}$ iff there are states $p_1, p_2, \ldots \in Q$ such that for each $ 0 \leq i$ either (a) or (b) holds. 
  By $FT^{\infty}(p_0)$ we denote the set of all infinite failure traces of $p_0$.
  Two processes $p_0, q_0 \in Q$ are {\em infinitary failure trace equivalent}, denoted $p_0\failtraceeq q_0$ iff $FT(p_0)=FT(q_0)$ and $FT^{\infty}(p_0)=FT^{\infty}(q_0)$.
\end{definition}
While ready semantics takes into account the set of enabled actions and, by this, the possible next step of the reached process, possible futures semantics considers all possible next sequences of actions \ie traces the reached process is capable of.
\begin{definition}[Possible Futures Semantics~\cite{Glabbeek1990Linear}]\label{def:possfuteq}
  Let $A = (Q, \Sigma, \transrel{})$ be an LTS.
  A {\em possible future of $p_0 \in Q$} is a pair $(\sigma,X)\in \Sigma^{*}\times 2^{\Sigma^{*}}$ such that there exists a $p\in Q$ with $p_0\xlongrightarrow{\sigma} p$ and $X = \traces{p}$.
  The set of all possible futures of $p_0$ is denoted by $PF(p_0)$.
  Two processes $p_0, q_0 \in Q$ are {\em infinitary possible futures equivalent}, denoted $p_0 \pfeq q_0$ iff $PF(p_0)=PF(q_0)$ and $p_0 \equiv_{T}^{\infty} q_0$.
\end{definition}
A process simulates another one when it is capable of mimicking every step of the simulated process such that the reached process again simulates the result of the original step.
Two processes are simulation equivalent if one simulates the other and vice versa.
\begin{definition}[Simulation Semantics]\label{def:sim}
  Let $A = ( Q, \Sigma, \transrel{})$ be an LTS.
  A {\em simulation} between $p_0$ and $q_0 \in Q$ is a binary relation $R\subseteq Q\times Q$ such that (1) $(p_0, q_0)\in R$ and (2) if $(p, q)\in R$, it holds that
  \begin{itemize}
    \item $p\transrel{a} p'$ implies that $\exists q'\in Q : q\transrel{a} q'$ and $(p',q')\in R$
  \end{itemize}
  $p_0$ and $q_0$ are {\em similar}, denoted $p_0 \seq q_0$ iff there exists a simulation relation between $p_0$ and $q_0$ and a simulation relation between $q_0$ and $p_0$.
\end{definition}
A bisimulation relation between to processes is a single relation that is a simulation for both directions simultaneously.
\begin{definition}[Bisimulation Semantics]\label{def:bisim}
  Let $A = ( Q, \Sigma, \transrel{})$ be an LTS.
  A {\em bisimulation} between $p_0$ and $q_0 \in Q$ is a binary relation $R\subseteq Q\times Q$ such that (1) $(p_0, q_0)\in R$ and (2) if $(p, q)\in R$, it holds that
  \begin{itemize}
    \item $p\transrel{a} p'$ implies that $\exists q'\in Q : q\transrel{a} q'$ and $(p',q')\in R$ and
    \item $q\transrel{a} q'$ implies that $\exists p'\in P : p\transrel{a} p'$ and $(p',q')\in R$.
  \end{itemize}
  $p_0$ and $q_0$ are {\em bisimilar}, denoted $p_0\equiv_B q_0$ iff there exists a bisimulation relation between $p_0$ and $q_0$.
\end{definition}
Corresponding processes have to be capable of the same actions leading again to equivalent processes.
All the equivalences we defined form an equivalence spectrum, having bisimilarity referring to the finest equivalence and trace equivalence being the coarsest one.
All interrelations are depicted in Fig.~\ref{fig:semeq-hier}.
For a comprehensive overview we refer to van Glabbeek~\cite{Glabbeek1990Linear}.
For common LTS operators, all the branching time equivalences described in this section are congruences, \ie equivalence of components implies equivalence of composition.
In the following sections, we analyze the notions of weak and strong fairness to define notions of {\em fair language equivalence}.
In both cases, we look at the question which of the equivalences, mentioned in this section, also implies fair language equivalence.
\section{Strong Fairness} %
\label{sec:strong-fairness}
In concurrent and distributed systems, several choices between actions take place with some of them turning up recurrently.
As those choices are resolved independently of the current execution, the system may always choose the same action while neglecting the alternatives.
Fairness assumptions are incorporated to overcome such an unfair treatment of actions.
Thereupon, according notions of fair system languages are defined.
The following definition is inspired by the notion of {\em fairness} as introduced by Reisig~\cite{Reisig1998}.
\begin{definition}[Strong Fairness]\label{def:fairness}
Let $A = (Q, \Sigma, \transrel{})$ be an LTS and $\fairatt\subseteq\Sigma$.
A run $\rho = p_0 a_1 p_1 a_2 \ldots$ of $p_0 \in Q$ is {\em strong fair \wrt $\fairatt$} iff for all $a\in\fairatt$, the existence of infinitely many $p_i$ in $\rho$ with $p_i \transrel{a}$ implies that $a$ occurs infinitely often in $\rho$.
The set of all strong fair runs of $p_0$ \wrt \fairatt is denoted by $\sfairop{\fairatt}(p_0)$.
The \emph{strong fair language of $p_0$ \wrt \fairatt} is defined as $\sFairLang{\fairAtt}(p_0) := \{\trace(\rho) \mid \rho \in \sfairop{\fairatt}(p_0) \}$.
We call a trace $\sigma$ strong fair in $p_0$ \wrt \fairatt iff $\sigma \in \sFairLang{\fairatt}(p_0)$ \ie there is a run $\rho$ with $tr(\rho) = \sigma$ that is strong fair in $p_0$ \wrt \fairatt.
\end{definition}
By this definition, finite runs are always strong fair.
Please note that the superscript $\Phi$ is just an identifier for strong fairness.
In the following section, $\varphi$ will likewise denote weak fairness. 
Fair languages describe the actually executable system behaviors, as the possibility of unfair treatment of recurrent choices vanishes in case of infinite executions.
Subsequently, we identify equivalences within the spectrum presented in Sect.~\ref{sec:preliminaries} for which equivalence of two systems implies equal strong fair languages.
We thereby find an equivalence spectrum for which strong fairness assumptions, as reflected by the choice of $\mathcal F$, leave the set of fair runs, and thereby the fair language, of a system invariant.
In a system development process, stepwise refinement \wrt this spectrum may be used to substitute components by equivalent subsystems without altering the fair behaviors of the subsystems.
\begin{figure}[tbp]
  \centering
  \begin{subfigure}[b]{.15\textwidth}
    \centering
    \begin{tikzpicture}
      \node[state,label=left:$p$] (q0) {};
      \node[state,draw=white,fill=white,below=of q0] (h) {};
      \node[state,draw=white,fill=white,below=of h] {};

      \draw[edge,loop above] (q0) to node[auto] {$a$} (q0);
      \draw[edge,loop below] (q0) to node[auto] {$b$} (q0);
    \end{tikzpicture}
  \end{subfigure}
  \qquad
  \begin{subfigure}[b]{.15\textwidth}
    \centering
    \begin{tikzpicture}
      \node[state,label=left:$q$] (q0) {};      
      \node[state,draw=white,fill=white,below=of q0] (h) {};
      \node[state,draw=white,fill=white,below=of h] {};

      \node[state] (q1) at (5em,0) {};

      \draw[edge,loop above] (q0) to node[auto] {$a$} (q0);
      \draw[edge,loop below] (q0) to node[auto] {$b$} (q0);
      \draw[edge,loop below] (q1) to node[auto] {$b$} (q1);
      \draw[edge,bend left=30] (q0) to node[auto] {$b$} (q1);
    \end{tikzpicture}
  \end{subfigure}
  \qquad
  \begin{subfigure}[b]{.25\textwidth}
    \centering
    \begin{tikzpicture}
      \node[state,label=above:$r$] (q0) {};
      \node[state, below left=of q0] (q1) {};
      \node[state, below=of q1] (q2) {};

      \draw[edge] (q0) to node[auto,swap] {$a$} (q1);
      \draw[edge] (q1) to node[auto,swap] {$b$} (q2);
      \draw[edge,bend right=10] (q2) to node[circle, below=.5em and -2em] {$c$} (q0);

      \node[state, below right=of q0] (q11) {};
      \node[state, below=of q11] (q21) {};

      \draw[edge] (q0) to node[auto] {$a$} (q11);
      \draw[edge] (q11) to node[auto] {$b$} (q21);
      \draw[edge,bend left=10] (q21) to node[circle, below=.5em and -2em] {$c$} (q0);

      \node[state, right=of q21] (q22) {};
      \node[state, below=of q22] (q32) {};

      \draw[edge] (q11) to node[auto] {$d$} (q22);
      \draw[edge] (q21) to node[auto] {$d$} (q32);

    \end{tikzpicture}
  \end{subfigure}
  \qquad
  \begin{subfigure}[b]{.25\textwidth}
    \centering
    \begin{tikzpicture}
      \node[state,label=above:$s$] (q0) {};
      \node[state, below left=of q0] (q1) {};
      \node[state, below=of q1] (q2) {};

      \draw[edge] (q0) to node[auto,swap] {$a$} (q1);
      \draw[edge] (q1) to node[auto,swap] {$b$} (q2);
      \draw[edge,bend right=10] (q2) to node[circle, below=.5em and -2em] {$c$} (q0);

      \node[state, below right=of q0] (q11) {};
      \node[state, below=of q11] (q21) {};

      \draw[edge] (q0) to node[auto] {$a$} (q11);
      \draw[edge] (q11) to node[auto] {$b$} (q21);
      \draw[edge,bend left=10] (q21) to node[circle, below=.5em and -2em] {$c$} (q0);

      \node[state, right=of q21] (q22) {};
      \node[state, below right=of q2] (q32) {};

      \draw[edge] (q11) to node[auto] {$d$} (q22);
      \draw[edge] (q2) to node[auto] {$d$} (q32);

    \end{tikzpicture}
  \end{subfigure}
  \caption{A finite LTS, where $p\teq q$, but $\sFairLang{\{a\}}(p)\neq\sFairLang{\{a\}}(q)$ and $r\req s$, but $\sFairLang{\{d\}}(r)\neq\sFairLang{\{d\}}(s)$.}
  \label{fig:traces-counterexample}
\end{figure}
As strong fairness considers the resolution of (nondeterministic) choices, a linear time equivalence such as trace equivalence should not suffice to preserve strong fairness.
This claim is supported by the example LTS depicted in Fig.~\ref{fig:traces-counterexample}.
The traces of processes $p$ and $q$ are equal, \ie $\textit{Traces}(p) = \{a, b\}^* = \textit{Traces}(q)$ and $\textit{Traces}^{\infty}(p) = \{a, b\}^\omega = \textit{Traces}^{\infty}(q)$.
However, the strong fair language \wrt $\fairAtt = \{a\}$ of both processes differ since $b^\omega$ is not fair \wrt action $a$ in $p$.
In contrast, $q$ may take the transition leading away from $q$ at some point disabling action $a$ forever and resulting in a fair trace $b^\omega$.

Ready equivalence and failures equivalence are among the weakest branching time equivalences.
They compare prefixes of possibly infinite runs and their capabilities, \ie a set of actions being enabled (readies) or disabled (failures) after the respective prefix.
But fairness also considers information on actions being enabled during the run corresponding to the mentioned prefix.
Furthermore, considering a run of a process with two prefixes $w_1$ and $w_2$ where $w_1$ is a prefix of $w_2$ does not necessarily imply the existence of a unique run with the same two prefixes in a failures/ready equivalent process.
For example, processes $r$ and $s$ in Fig.~\ref{fig:traces-counterexample} are ready equivalent thus also failures equivalent.
Both processes contain two different $abc$-loops such that action $d$ can be but does not have to be enabled after any $a$ or $b$ throughout the infinite trace $(abc)^{\omega}$.
Furthermore $s$ is forced to enable $d$ once in every $abc$ subtrace while $r$ is not.
Therefore $d$ is enabled infinitely often in any run of $s$ corresponding to $(abc)^{\omega}$ whereas $r$ can use the left loop solemnly, never enabling $d$.
As $d$ is not recurrent in $(abc)^{\omega}$, this trace is strong fair \wrt $\fairAtt=\{d\}$ in $r$ but not in $s$ resulting in different fair languages.
Thus, respecting branching time does not necessarily imply strong fair language equivalence.

However, failure trace equivalence and ready trace equivalence preserve information on disabledness or enabledness of actions on finite and infinite runs.
In contrast to failures and ready equivalence, not only capabilities at the end of prefixes are considered but also in every intermediate step.
Having this information also included in infinite runs suffices to restrict equivalent processes to those having the same strong fair language \wrt some $\fairAtt\subseteq\Sigma$.
Our proof incorporates failure trace equivalence, immediately implying the result for stronger equivalences, \eg ready trace equivalence.
\begin{theorem}\label{thm:failuretraceequiv-fairness}
  Let $A = (Q, \Sigma, \transrel{})$ be an LTS and $p_0, q_0 \in Q$ such that $p_0\failtraceeq q_0$.
  For any $\fairAtt\subseteq\Sigma$, $\sFairLang{\fairAtt}(p_0)=\sFairLang{\fairAtt}(q_0)$.
\end{theorem}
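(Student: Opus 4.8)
The plan is to establish the inclusion $\sFairLang{\fairAtt}(p_0)\subseteq\sFairLang{\fairAtt}(q_0)$; the reverse follows by symmetry, as $\failtraceeq$ is symmetric. So fix $\sigma\in\sFairLang{\fairAtt}(p_0)$, witnessed by a strong fair run $\rho=p_0a_1p_1a_2\ldots$ of $p_0$ with $\trace(\rho)=\sigma$. If $\rho$ is finite, then $\sigma$ is itself a failure trace of $p_0$ (read every letter under clause~(a)), so $\sigma\in FT(p_0)=FT(q_0)$ yields a finite run of $q_0$ with trace $\sigma$; being finite it is strong fair by \defn{def:fairness}, hence $\sigma\in\sFairLang{\fairAtt}(q_0)$. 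The substantial case, treated below, is $\rho$ infinite.

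For the infinite case, the key idea is to record, at \emph{every} intermediate state of $\rho$, exactly which fair actions are disabled there, and to feed this information into a failure trace via its refusal sets. Concretely, I would set $X_i:=\{a\in\fairAtt\parl p_i\not\transrel{a}\}$ for each $i$ and define the infinite word $\hat\sigma:=X_0\,a_1\,X_1\,a_2\,X_2\,a_3\ldots$, inserting each refusal set $X_i$ immediately before $a_{i+1}$. Equipping $\hat\sigma$ with the state sequence $p_0p_0p_1p_1p_2\ldots$, every action letter $a_{i+1}$ satisfies clause~(a) since $p_i\transrel{a_{i+1}}p_{i+1}$, and every set letter $X_i$ satisfies clause~(b): the state is unchanged and $X_i\subseteq\{a\parl p_i\not\transrel{a}\}$ holds by construction. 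Thus $\hat\sigma\in FT^{\infty}(p_0)$, and $p_0\failtraceeq q_0$ gives $\hat\sigma\in FT^{\infty}(q_0)$. Any realizing state sequence for $\hat\sigma$ in $q_0$ has the form $q_0\,q_0\,r_1\,r_1\,r_2\ldots$, so it induces a run $\rho'=q_0a_1r_1a_2r_2\ldots$ with $\trace(\rho')=a_1a_2\ldots=\sigma$ in which, additionally, $X_i\subseteq\{a\parl r_i\not\transrel{a}\}$ for every $i$.

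It remains to verify that $\rho'$ is strong fair \wrt $\fairAtt$, and this is exactly where the encoding pays off. Take $a\in\fairAtt$ and suppose $r_i\transrel{a}$ for infinitely many $i$. For any such $i$ we have $a\notin X_i$ (since $a$ is enabled, not refused, at $r_i$), and because $a\in\fairAtt$ the definition of $X_i$ forces $p_i\transrel{a}$. Hence $\{i\parl r_i\transrel{a}\}\subseteq\{i\parl p_i\transrel{a}\}$, so $a$ is enabled at infinitely many $p_i$ in $\rho$ as well. Strong fairness of $\rho$ then guarantees that $a$ occurs infinitely often in $\rho$; as $\rho$ and $\rho'$ carry the same trace $\sigma$, the action $a$ occurs infinitely often in $\rho'$ too. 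Thus $\rho'$ is strong fair \wrt $\fairAtt$, giving $\sigma=\trace(\rho')\in\sFairLang{\fairAtt}(q_0)$ and completing the inclusion.

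The crux --- and the only place that can go wrong --- is the \emph{direction} of the enabledness comparison: a refusal letter only certifies that the actions it lists are disabled in the matched run, so to transport fairness I must record \emph{all} disabled fair actions at each step, which buys precisely ``enabled in $\rho'$ implies enabled in $\rho$''. I would also note the harmless degenerate points: an empty refusal set $X_i=\emptyset$ is a legal failure-trace letter; $\fairAtt$, and hence each $X_i$, may be infinite without disturbing the pointwise argument on a single action $a$; and equality of traces transfers the actual occurrence counts of each action verbatim, so no separate bookkeeping of occurrences is needed.
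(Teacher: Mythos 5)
Your proof is correct and takes essentially the same route as the paper: you encode the strong fair run of $p_0$ as an infinite failure trace whose refusal sets record the disabled fair actions at each intermediate state, transport it via $\failtraceeq$, and observe that any run of $q_0$ realizing this failure trace can enable an action of $\fairAtt$ only where $p_0$'s run did, so strong fairness transfers along the common trace. The only cosmetic differences are that the paper phrases the argument as a contradiction and uses maximal refusal sets over all of $\Sigma$ rather than restricting them to $\fairAtt$; neither changes the substance.
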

\begin{proof}
  Assume, there is some trace $\sigma\in\sFairLang{\fairAtt}(p_0)\setminus\sFairLang{\fairAtt}(q_0)$.
  It holds that $\sigma\in\textit{Traces}^{\infty}(q_0)$, as
  \begin{inparaenum}[(a)]
  \item if $\sigma$ was finite, it would be in $\sFairLang{\fairAtt}(q_0)$ and
  \item from $p_0\failtraceeq q_0$ we deduce $p_0\equiv_T^{\infty} q_0$, implying $\textit{Traces}^{\infty}(p_0)=\textit{Traces}^{\infty}(q_0)$.
  \end{inparaenum}
  As $\sigma\in\sFairLang{\fairAtt}(p_0)$, there is a strong fair run $\rho_p = p_0 a_1 p_1 a_2 \ldots$ \wrt $\fairAtt$ such that $\trace(\rho_p) = \sigma$.
  Let $\sigma_p$ be the maximal failure trace corresponding to $\rho_p$ \ie the trace that alternates between the action labels of $\rho_p$ and the maximal refusal sets in each state.
  This $\sigma_p$ is defined as $\sigma_p = X_0 a_1 X_1 a_2 \dots$ where $X_i = \{ a' \in \Sigma \mid p_i\not\transrel{a'} \}$. As $\sigma_p$ corresponds to a strong fair run of $p_0$, it holds that for each $a\in\fairAtt$,
\begin{inparaenum}[(i)]
\item label $a$ occurs infinitely often in $\rho_p$, \ie $\forall i\in\mathbb{N}: \exists j>i: a_j = a$ or
\item the number of refusal sets that the label $a$ does not occur in is finite, \ie $\exists n\in\mathbb{N}:\forall j \geq n: a \in X_j$.
\end{inparaenum}
From $p_0\failtraceeq q_0$ we know that $\sigma_p$ is also a failure trace of $q_0$.
Though $\sigma_p$ is a maximal failure trace for $p_0$ it does not have to be maximal for $q_0$.
A corresponding maximal failure trace for $q_0$ would be $\sigma_q = Y_0 a_1 Y_1 a_2 \dots$ with refusal sets $Y_i = \{ a' \in \Sigma \mid q_i\not\transrel{a'} \}$ and we get that $X_i \subseteq Y_i$.
As $\sigma \not\in \sFairLang{\fairAtt}(q_0)$, every run $\rho_q = q_0 a_1 q_1 a_2 \ldots$ that corresponds to $\sigma_q$ has to neglect strong fairness for some $a\in\fairAtt$.
If case (i) holds, $\rho_q$ contains $a$ infinitely often and, by this, respects strong fairness of $a$.
If case (ii) holds, there is an $n\in\mathbb{N}$ such that every refusal set $Y_j = \{ a' \in \Sigma \mid q_j\not\transrel{a'} \}$ with $j>n$ contains $a$.
Otherwise $\rho_q$ would not be a run that corresponds to $\sigma_q$. Therefore $\rho_q$ does not neglect fairness of $a$ which contradicts $\sigma\not\in \sFairLang{\fairatt}(q_0)$.
The case of $\sigma\in\sFairLang{\fairAtt}(q_0)\setminus\sFairLang{\fairAtt}(p_0)$ is analogous.
\end{proof}
Our result immediately implies that all equivalences $\equiv\subseteq\failtraceeq$ preserve strong fairness.
This result together with the following counterexamples, concerning ready equivalence, possible futures equivalence and simulation equivalence, indicate a spectrum of strong fairness preserving equivalences (\cf Fig.~\ref{fig:semeq-hier}).

As pointed out in the discussion about trace equivalence, processes $p$ and $q$ in Fig.~\ref{fig:traces-counterexample} do not share the same strong fair language.
However, it holds that $p \seq q$ which may be seen by simulation relations containing nothing but the pair $(p,q)$ ($(q,p)$, resp.).
Every $p \transrel{a} p$ step can be simulated by a corresponding $q \transrel{a} q$ step and vice versa.
The same holds for $p \transrel{b} p$ steps.
The $b$ step of $q$ leading away from $q$ and all subsequent $b$ steps can also be simulated by a $p \transrel{b} p$ step.
This counterexample lets us conclude that simulation semantics does not preserve fairness.

\begin{figure}[htbp]
  \centering
  \begin{subfigure}{.4\linewidth}
    \centering
    \begin{tikzpicture}
		  \node[state,label=above:$p$] (q0) {};
		  \node[state, left=of q0] (q1) {};
		  \node[state, right=of q0] (q5) {};
		  \node[state, right=of q5] (q6) {};
		
		  \draw[edge,bend right=30] (q0) to node[auto,swap] {$a$} (q1);
          \draw[edge,bend right=30] (q1) to node[auto,swap] {$a$} (q0);
          \draw[edge,bend left=30] (q0) to node[auto] {$a$} (q5);
          \draw[edge,bend left=30] (q5) to node[auto] {$a$} (q0);
          \draw[edge] (q5) to node[auto] {$c$} (q6);
    \end{tikzpicture}
  \end{subfigure}
  \qquad
  \begin{subfigure}{.4\linewidth}
    \centering
		\begin{tikzpicture}
		  \node[state,label=above:$q$] (q0) {};
		  \node[state, left=of q0] (q1) {};
		  \node[state, below=of q1] (q2) {};
		  \node[state, right=of q2] (q3) {};
		  \node[state, below=of q3] (q4) {};
		  \node[state, right=of q0] (q5) {};
		  \node[state, right=of q5] (q6) {};
		
		  \draw[edge] (q0) to node[auto,swap] {$a$} (q1);
          \draw[edge] (q1) to node[auto,swap] {$a$} (q2);
          \draw[edge] (q2) to node[auto,swap] {$a$} (q3);
          \draw[edge] (q3) to node[auto] {$a$} (q0);
          \draw[edge] (q3) to node[auto,swap] {$c$} (q4);
          \draw[edge,bend left=30] (q0) to node[auto] {$a$} (q5);
          \draw[edge,bend left=30] (q5) to node[auto] {$a$} (q0);
          \draw[edge] (q5) to node[auto] {$c$} (q6);
		\end{tikzpicture}
  \end{subfigure}
  \caption{
  A finite LTS, where $p\equiv_{PF}^\infty q$, but $\sFairLang{\{c\}}(p)\neq\sFairLang{\{c\}}(q)$.
  }
  \label{fig:possible-futures-counterexample}
\end{figure}
The LTS shown in Fig.~\ref{fig:possible-futures-counterexample} has two possible futures equivalent processes $p$ and $q$ that do have different fair languages \wrt $\fairAtt = \{c\}$.
Process $p$ can take the left $a$ loop to produce an infinite trace $a^\omega$ whereas $q$ is not capable of doing infinite $a$ actions without enabling $c$ infinitely often and by this treating $c$ unfair.
\section{Weak Fairness} %
\label{sec:weak-fairness}
The second characteristic of fairness we discuss is weak fairness.
In contrast to strong fairness, it does not suffice for an action to be enabled infinitely often to enforce an infinite number of occurrences in a fair run.
In weak fair runs, every action that is enabled infinitely long has to be taken infinitely often.
The following definition is inspired by the notion of {\em progress} as introduced by Reisig~\cite{Reisig1998}.
\begin{definition}[Weak Fairness]\label{def:weak-fairness}
Let $A = (Q, \Sigma, \transrel{})$ be an LTS and $\fairatt\subseteq\Sigma$.
A run $\rho = p_0 a_1 p_1 a_2 \ldots$ of $p_0 \in Q$ is {\em weak fair \wrt $\fairatt$} iff for all $a\in\fairatt$, the existence of an $i \geq 0$ with $p_j \transrel{a}$ for all $j \geq i$ implies that $a$ occurs infinitely often in $\rho$.
The set of all weak fair runs of $p_0$ \wrt \fairatt is denoted by $\wfairop{\fairatt}(p_0)$.
The \emph{weak fair language of $p_0$ \wrt \fairatt} is defined as $\wFairLang{\fairAtt}(p_0) := \{\trace(\rho) \mid \rho \in \wfairop{\fairatt}(p_0) \}$.
We call a trace $\sigma$ weak fair in $p_0$ \wrt \fairatt iff $\sigma \in \wFairLang{\fairatt}(p_0)$ \ie there is a run $\rho$ with $tr(\rho) = \sigma$ that is weak fair in $p_0$ \wrt \fairatt.
\end{definition}
Note that finite runs are always weak fair.
In case of preservation of weak fairness under ready equivalence we again find a counterexample.
The processes $p$ and $q$ depicted in Fig.~\ref{fig:weak-fairness-counterexample} are ready equivalent but their weak fair languages \wrt $\fairAtt = \{d, e\}$ differ.
The trace $a(bc)^\omega$ has no according fair run in $p$ as each run producing this trace would either have $d$ or $e$ enabled in each step following the first $a$ step.
In contrast, $q$ alternates between enabling $d$ and $e$ throughout each run.
Thus, weak fairness of $\{d, e\}$ is not neglected by $q$ and $\wFairLang{\{d, e\}}(p) \not\ni a(bc)^\omega \in \wFairLang{\{d, e\}}(q)$.
\begin{figure}[tbp]
  \centering
  \begin{subfigure}{.4\linewidth}
    \centering
    \begin{tikzpicture}
		  \node[state,label=above:$p$] (q0) {};
		  \node[state, below left=of q0] (q1) {};
		  \node[state, below=of q1] (q2) {};
		
		  \draw[edge] (q0) to node[auto,swap] {$a$} (q1);
		  \draw[edge,bend left=30] (q1) to node[auto] {$b$} (q2);
		  \draw[edge,bend left=30] (q2) to node[auto] {$c$} (q1);
		
		  \node[state, below right=of q0] (q11) {};
		  \node[state, below=of q11] (q21) {};

		  \draw[edge] (q0) to node[auto] {$a$} (q11);
		  \draw[edge,bend right=30] (q11) to node[auto,swap] {$b$} (q21);
		  \draw[edge,bend right=30] (q21) to node[auto,swap] {$c$} (q11);

		  \node[state, above right=of q11] (q22) {};
		  \node[state, below right=of q21] (q32) {};
		
		  \draw[edge] (q11) to node[auto,swap] {$e$} (q22);
		  \draw[edge] (q21) to node[auto] {$e$} (q32);
		
		  \node[state, above left=of q1] (q22) {};
		  \node[state, below left=of q2] (q32) {};
		
		  \draw[edge] (q1) to node[auto] {$d$} (q22);
		  \draw[edge] (q2) to node[auto,swap] {$d$} (q32);
    \end{tikzpicture}
  \end{subfigure}
  \qquad
  \begin{subfigure}{.4\linewidth}
    \centering
		\begin{tikzpicture}
		  \node[state,label=above:$q$] (q0) {};
		  \node[state, below left=of q0] (q1) {};
		  \node[state, below=of q1] (q2) {};
		
		  \draw[edge] (q0) to node[auto,swap] {$a$} (q1);
		  \draw[edge,bend left=30] (q1) to node[auto] {$b$} (q2);
		  \draw[edge,bend left=30] (q2) to node[auto] {$c$} (q1);
		
		  \node[state, below right=of q0] (q11) {};
		  \node[state, below=of q11] (q21) {};

		  \draw[edge] (q0) to node[auto] {$a$} (q11);
		  \draw[edge,bend right=30] (q11) to node[auto,swap] {$b$} (q21);
		  \draw[edge,bend right=30] (q21) to node[auto,swap] {$c$} (q11);

		  \node[state, above right=of q11] (q22) {};
		  \node[state, below right=of q21] (q32) {};
		
		  \draw[edge] (q11) to node[auto,swap] {$e$} (q22);
		  \draw[edge] (q21) to node[auto] {$d$} (q32);
		
		  \node[state, above left=of q1] (q22) {};
		  \node[state, below left=of q2] (q32) {};
		
		  \draw[edge] (q1) to node[auto] {$d$} (q22);
		  \draw[edge] (q2) to node[auto,swap] {$e$} (q32);
		
		\end{tikzpicture}
  \end{subfigure}
  \caption{
  A finite LTS, where $p\req q$, but $\wFairLang{\{d,e\}}(p)\neq\wFairLang{\{d,e\}}(q)$.
  }
  \label{fig:weak-fairness-counterexample}
\end{figure}

We may show that failure trace equivalence preserves also weak fairness, leaving us with the same spectrum of weak fairness preserving equivalences as in case of strong fairness.
\begin{theorem}\label{thm:failuretraceequiv-weakfairness}
  Let $A = (Q, \Sigma, \transrel{})$ be an LTS and $p_0, q_0 \in Q$ such that $p_0\failtraceeq q_0$.
  For any $\fairAtt\subseteq\Sigma$, $\wFairLang{\fairAtt}(p_0)=\wFairLang{\fairAtt}(q_0)$.
\end{theorem}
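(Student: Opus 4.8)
The plan is to mirror the proof of \thm{thm:failuretraceequiv-fairness} almost verbatim, changing only the translation of the fairness condition into refusal-set language. I would argue by contradiction: suppose some $\sigma\in\wFairLang{\fairAtt}(p_0)\setminus\wFairLang{\fairAtt}(q_0)$. Since finite runs are always weak fair, $\sigma$ must be infinite, and from $p_0\failtraceeq q_0$ we obtain $p_0\equiv_T^\infty q_0$, so $\sigma$ is an infinite trace of $q_0$. I then fix a weak fair run $\rho_p = p_0 a_1 p_1 a_2 \ldots$ with $\trace(\rho_p)=\sigma$ and form its maximal failure trace $\sigma_p = X_0 a_1 X_1 a_2 \ldots$, where $X_i = \{a'\in\Sigma \mid p_i\not\transrel{a'}\}$ is the maximal refusal set at $p_i$.

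The key step is to restate weak fairness of $\rho_p$ through the sets $X_i$. Since $a$ is enabled at $p_i$ exactly when $a\notin X_i$, the premise ``eventually always enabled'' reads $\exists n\colon\forall j\geq n\colon a\notin X_j$, and its negation is that $a$ lies in infinitely many $X_j$. Hence weak fairness of $\rho_p$ yields, for every $a\in\fairAtt$, either (i) $a$ occurs infinitely often in $\rho_p$, or (ii) $a\in X_j$ for infinitely many $j$. This is precisely where the argument departs from the strong fairness case: there the second alternative was that $a$ eventually lies in \emph{every} refusal set, whereas here it need only recur in infinitely many of them.

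Next I would transport $\sigma_p$ to $q_0$. As $\sigma_p$ is an infinite failure trace of $p_0$ and $p_0\failtraceeq q_0$, it is also a failure trace of $q_0$, witnessed by a run $\rho_q = q_0 a_1 q_1 a_2 \ldots$ whose maximal refusal sets $Y_i = \{a'\in\Sigma \mid q_i\not\transrel{a'}\}$ satisfy $X_i\subseteq Y_i$. This $\rho_q$ has trace $\sigma$, so by $\sigma\notin\wFairLang{\fairAtt}(q_0)$ it must neglect weak fairness for some $a\in\fairAtt$; I derive a contradiction by checking both alternatives. In case (i), $a$ still occurs infinitely often in $\rho_q$ because it has the same trace, so $a$ is treated weak fair. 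In case (ii), $a\in X_j$ infinitely often together with $X_j\subseteq Y_j$ forces $a\in Y_j$ infinitely often, so $a$ is disabled infinitely often along $\rho_q$, the premise of weak fairness fails, and $a$ is again treated weak fair. Thus $\rho_q$ is weak fair, contradicting $\sigma\notin\wFairLang{\fairAtt}(q_0)$; the case $\sigma\in\wFairLang{\fairAtt}(q_0)\setminus\wFairLang{\fairAtt}(p_0)$ is symmetric.

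The main point requiring care --- rather than a genuine obstacle --- is the logical form of condition (ii) and its behavior under the enlargement of refusal sets. The inclusion $X_i\subseteq Y_i$ preserves ``membership in infinitely many refusal sets'', which is exactly the negation of the weak-fairness premise; this is what makes the single monotonicity $X_i\subseteq Y_i$ suffice, just as the analogous ``eventually in every refusal set'' property survived the enlargement in the strong case. Confirming that the weak-fairness premise translates to the ``infinitely often disabled'' condition, and that this is precisely the property carried over by $X_i\subseteq Y_i$, is the one place where a careless adaptation of the strong fairness proof could go wrong.
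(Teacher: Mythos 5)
Your proposal is correct and takes essentially the same approach as the paper's proof: contradiction via the maximal failure trace $\sigma_p = X_0 a_1 X_1 a_2 \ldots$, translation of weak fairness of $\rho_p$ into the disjunction ``$a$ occurs infinitely often'' or ``$a \in X_j$ for infinitely many $j$'', and transfer to $q_0$ through the inclusion $X_i \subseteq Y_i$ of maximal refusal sets. You also correctly isolate the only point of divergence from the strong-fairness theorem --- the quantifier pattern $\forall n\,\exists j \geq n$ in condition (ii) instead of $\exists n\,\forall j \geq n$ --- which is exactly where the paper's two proofs differ as well.
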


\begin{proof}
  Assume there is a trace $\sigma\in\wFairLang{\fairAtt}(p_0)\setminus\wFairLang{\fairAtt}(q_0)$.
  It holds that $\sigma\in\textit{Traces}^{\infty}(q_0)$, since
  \begin{inparaenum}[(a)]
  \item if $\sigma$ was finite, it would be in $\wFairLang{\fairAtt}(p_0)$ and
  \item $p_0\failtraceeq q_0$ implies that $p_0\equiv_T^{\infty} q_0$, implying $\textit{Traces}^{\infty}(p_0)=\textit{Traces}^{\infty}(q_0)$.
  \end{inparaenum}
  As $\sigma\in\wFairLang{\fairAtt}(p_0)$, there is a weak fair run $\rho_p = p_0 a_1 p_1 a_2 \ldots$ \wrt $\fairAtt$ such that $\textit{tr}(\rho_p) = \sigma$.
  Let $\sigma_p$ be the maximal failure trace corresponding to $\rho_p$ \ie the trace that alternates between the action labels of $\rho_p$ and the maximal refusal sets in each state.
  This $\sigma_p$ is defined as $\sigma_p = X_0 a_1 X_1 a_2 \dots$ where $X_i = \{ a' \in \Sigma \mid p_i\not\transrel{a'} \}$. 
  As $\sigma_P$ corresponds to a weak fair run of $p_0$, it holds that for each $a\in\fairAtt$,
\begin{inparaenum}[(i)]
\item the label $a$ occurs infinitely often in $\rho_p$, \ie $\forall i\in\mathbb{N}: \exists j>i: a_j = a$, or
\item 
the label $a$ occurs in infinitely many refusal sets, \ie for every $n$ there is $j \geq n$ such that $X_j$ contains $a$. This can be formalized by $\forall n\in\mathbb{N}:\exists j \geq n: a \in X_j$.
\end{inparaenum}

From $p_0\failtraceeq q_0$ we know that $\sigma_p$ is also a failure trace of $q_0$.
Though $\sigma_p$ is a maximal failure trace of $p_0$ it does not have to a be maximal of $q_0$.
A corresponding maximal failure trace of $q_0$ would be $\sigma_q = Y_0 a_1 Y_1 a_2 \dots$ with refusal sets $Y_i = \{ a' \in \Sigma \mid q_i\not\transrel{a'} \}$ and we get that $X_i \subseteq Y_i$.
As $\sigma \not\in \wFairLang{\fairAtt}(q_0)$, every run $\rho_q = q_0 a_1 q_1 a_2 \ldots$ that corresponds to $\sigma_q$ neglects fairness for some $a\in\fairAtt$.
If case (i) holds, $\rho_q$ contains $a$ infinitely often and, by this, respects weak fairness of $a$.
In case (ii), for every $n\in\mathbb{N}$ there is a $j \geq n$ such that $Y_j = \{ a' \in \Sigma \mid q_j\not\transrel{a'} \}$ contains $a$.
This can directly be followed from $X_i \subseteq Y_i$.
Otherwise $\rho_q$ would not be a run that corresponds to $\sigma_q$. Therefore $\rho_q$ does not neglect weak fairness of $a$, contradicting $\sigma\not\in \wFairLang{\fairAtt}(q_0)$.
The case of $\sigma\in\wFairLang{\fairAtt}(q_0)\setminus\wFairLang{\fairAtt}(p_0)$ is analogous.
\end{proof}
The crucial point in proving preservation of weak and strong fairness under failure trace equivalence is the construction of a maximal failure trace that corresponds to a given generic trace.
This construct contains sufficient information to decide whether the given generic trace is fair regardless of weak or strong fairness.
As those traces are preserved for failure trace equivalent systems, fairness is preserved.
The proofs of strong and weak fairness only differ in the description of disallowed enabledness for actions in $\fairAtt\subseteq\Sigma$ that are not recurrent.
While strong fairness requires the actions to be always eventually disabled, weak fairness needs those actions to be eventually always disabled.

As in Sect.~\ref{sec:strong-fairness}, we also discuss the weak fair languages of simulation equivalent as well as possible futures equivalent processes. 
Regarding simulation equivalence, processes $p$ and $q$ of Fig.~\ref{fig:traces-counterexample} again serve as a counterexample, since they are simulation equivalent but have different weak fair languages.
This counterexample lets us conclude that simulation semantics preserves neither strong nor weak fairness.
Also in case of possible futures we subsequently discuss a counterexample justifying the indicated border of fairness preservation in Fig.~\ref{fig:semeq-hier}. 
The LTS shown in Fig.~\ref{fig:weak-fairness-pf-counterexample} has two possible futures equivalent processes $p$ and $q$ that do show differences in their weak fair languages \wrt $\fairAtt = \{b\}$.
Process $p$ may follow the infinite branch to produce the infinite trace $a^\omega$ thereby respecting weak fairness of action $b$, whereas every trace $a^\omega$ of $q$ eventually enables $b$ in every subsequent step.
Each trace of $p$ returning to $p$ (always after $2n$ steps for some $n\in\mathbb N$) is reflected by infinite branching of $q$ into an infinite number of paths of finite lengths of multiples of two, indicated by the three horizontally (process $p$) or vertically (process $q$) aligned dots ($\cdots$).

\begin{figure}[tbp]
  \centering
  \begin{subfigure}{.4\linewidth}
    \centering
    \begin{tikzpicture}
		  \node[state,label=left:$p$] (q0) {};
		  \node[state, right=of q0] (q1) {};
		  \node[state, right=of q1] (q2) {};
		  \node[state, right=of q2] (q3) {};
		
          \draw[edge,loop above] (q0) to node[auto] {$a$} (q0);
          \draw[edge,loop below] (q0) to node[auto] {$b$} (q0);
		  \draw[edge, bend left = 30] (q0) to node[auto] {$a$} (q1);
          \draw[edge, bend left = 30] (q1) to node[auto] {$a$} (q2);
          \draw[edge, bend left = 30] (q2) to node[auto] {$a$} (q3);
          \draw[edge, bend left = 30] (q1) to node[auto] {$a$} (q0);
          \draw[edge, bend left = 30] (q2) to node[auto] {$a$} (q1);
          \draw[edge, bend left = 30] (q3) to node[auto] {$a$} (q2);
          \node[right=1em of q3] () {$\hdots$};

    \end{tikzpicture}
  \end{subfigure}
  \qquad
  \begin{subfigure}{.4\linewidth}
    \centering
		\begin{tikzpicture}
		  \node[state,label=left:$q$] (q0) {};
          \node[below=of q0, draw=none] (al1) {};
          \node[state,right=of al1] (qaa1) {}
          	edge[pre] node[auto,swap] {$a$} (q0);
          \node[state,right=of qaa1] (qaa2) {}
          	edge[pre] node[auto,swap] {$a$} (qaa1);
          \node[state,right=of qaa2] (qaa3) {}
          	edge[pre] node[auto,swap] {$a$} (qaa2);
          \node[state,above=of qaa2] (qa1) {}
          	edge[pre] node[auto,swap] {$a$} (q0);
          \node[right=of qaa3, draw=none] (al2) {};
		  \node[state,fill=white,draw, above=of al2] (q2) {}
          	edge[pre,bend right=25] node[auto,swap] {$a$} (q0)
          	edge[pre,bend right=50] node[auto,swap] {$b$} (q0)
          	edge[pre] node[auto,swap] {$a$} (qa1)
          	edge[pre] node[auto,swap] {$a$} (qaa3);
          \node[below=.7em of qaa2] {$\vdots$};
		  \draw[edge,loop above] (q2) to node[auto] {$a$} (q2);
          \draw[edge,loop below] (q2) to node[auto] {$b$} (q2);

		\end{tikzpicture}
  \end{subfigure}
  \caption{
  An LTS, where $p\equiv_{PF}^\infty q$, but $\wFairLang{\{b\}}(p)\neq\wFairLang{\{b\}}(q)$.
  }
  \label{fig:weak-fairness-pf-counterexample}
\end{figure}

The reason why the processes depicted in Fig.~\ref{fig:weak-fairness-pf-counterexample} are indistinguishable by possible futures is that the captured futures only account for the traces from a certain state and traces are, by definition, finite.
When considering a slightly stronger notion of possible futures, incorporating also infinite traces as possible futures, $p$ and $q$ may be distinguished, possibly altering the decision on weak fairness preservation in general.
Further observe that the given processes both have an infinite state space, which is different from any other counterexample given in this paper. 
We do not give a formal proof but strongly conjecture that for finite-state processes (\ie for which the set of all reachable processes is finite), possible futures equivalence preserves weak fairness.
The idea of the proof is that a weak fair run of a finite-state process $p$ traverses at least one reachable process infinitely many times.
Moreover, there must be such a process respecting weak fairness of some action $b$, \ie $b$ is disabled and the trace $b$ is no possible future of that process.
Such a process must also be reachable from a possible futures equivalent process $q$, which in turn may be used to construct the same weak fair run as given by $p$.
Since our argument is concerned with an arbitrary infinite weak fair run, the claim holds in general.
The counterexample for strong fairness, given by Fig.~\ref{fig:possible-futures-counterexample}, already employs finite-state processes thus indicating a split of the borders of fairness preservation for finite-state processes.

\section{Conclusion} %
\label{sec:related-work}
In this paper, we discovered an equivalence spectrum for which fairness assumptions are preserved between equivalent systems.
When distinguishing between internal and external moves of a system, as usually done in process-algebraic verification, handling internal actions in a fair way has already been studied for notions of global fairness~\cite{Bergstra1987,Puhakka2001Liveness,Puhakka2005Using}, \ie where not only a subset of actions and/or components is considered.
The main focus of these works is the proper handling of divergent system behavior, manifested in infinite sequences of internal actions, \ie $\tau$.
Thereupon, {\em Koomen's Fair Abstraction Rule} (KFAR)~\cite{Koomen85} is taken into account allowing to reduce divergent behavior to a single internal action in process-algebraic settings.
The resulting semantic equivalence Bergstra \etal~\cite{Bergstra1987} discovered and Puhakka and Valmari extensively studied~\cite{Puhakka2001Liveness,Puhakka2005Using} is called {\em Chaos-Free Failures Divergences} (CFFD).

There are open questions we plan to address in future work.
First, in this work we dealt with equivalence relations not abstracting from internal behavior (\ie $\tau$-transitions) and, thereupon, also fairness of actions that are not $\tau$.
When trying to lift our results to a weak equivalence spectrum~\cite{Glabbeek1993}, it is unclear what treatment of $\tau$-transitions is preferable.
Previous works, as mentioned above, already present a variety of possibilities, each worth investigating.
Besides the inclusion of internal actions, as conjectured in the last section, considering only finite state systems may yield even more diversity between the equivalences preserving different styles of fairness.

Furthermore, we would like to explore the conjecture, given in the end of Sect.~\ref{sec:weak-fairness}, namely that possible futures equivalence implies weak fair language equivalence in case of finite-state processes.
It might be interesting to see how the spectrum of fairness preservation evolves for the various notions of progress in finite-state processes.

Another important aspect for future work is the composition of system behaviors.
Some of the equivalences in van Glabbeek's spectrum are congruences for certain composition operators.
Congruences allow for a compositional reasoning in the sense that congruent systems show the same overall behavior when plugged into a fixed environment.
If for two systems, correctness is proven under the same fairness assumption, it is unclear whether the composed system behavior under fairness only constitutes the fair behaviors of the components.

\begin{figure}[htbp]
  \centering
  \begin{tikzpicture}
    \node[state,label=left:$p$] (q0) at (-2,-3) {};
    \draw[edge,loop below] (q0) to node[auto] {$\textit{a}$} (q0);
    \draw[edge,loop above] (q0) to node[auto] {$\textit{b}$} (q0);

    \node[state,label=left:$q$] (p0) at (0, -3) {};
    \node[state,right=of p0] (p1) {};
    \draw[edge,bend left=30] (p1) to node[auto] {$\textit{a}$} (p0);
    \draw[edge,bend left=30] (p0) to node[auto] {$\textit{c}$} (p1);
    \node at (-1, -3) {\Large $\parll$};

    \node[right=of p1] (eq) {\Large$\leadsto$};
    \node[right=of eq] (eq) {};
    \node[state,label=left:$(p\comma q)$,right=of eq] (r0) {};
    \node[right=of r0] (r01) {};
    \node[node distance=1em,state,right=of r01] (r1) {};
    \draw[edge,loop above] (r0) to node[auto] {$\textit{b}$} (r0);
    \draw[edge,loop above] (r1) to node[auto] {$\textit{b}$} (r1);
    \draw[edge,bend left=30] (r0) to node[auto] {$\textit{c}$} (r1);
    \draw[edge,bend left=30] (r1) to node[auto] {$\textit{a}$} (r0);
  \end{tikzpicture}
  \caption{
    Two systems $p$ and $q$ and a possible composition $(p,q)$.
  }
  \label{fig:composition-example}
\end{figure}
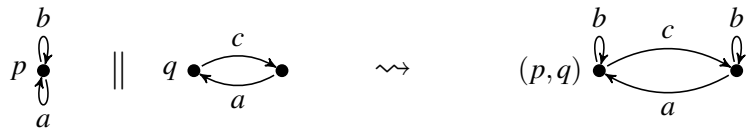

Consider for example the systems $p$ and $q$ in Figure \ref{fig:composition-example} under fairness of $\fairAtt = \{a\}$.
Regardless whether strong or weak fairness is taken into account, the run $b^\omega$ is no fair run of process $p$.
But when composed with process $q$ in a manner where transitions with the same action labels have to be done synchronously, the disregard of joined action $a$ becomes a possibility as it is not enabled initially.
Therefore, $b^\omega$ becomes a fair trace of the composed system whereas it does not occur in any of the fair languages of the components.
This example highlights the influence of the used composition mechanism and properties of components on the relation of fair languages of components and composed system.

When considering composition operators like the one of CCS by Milner~\cite{Milner1982}, dealing with internal behavior under fairness is crucial.
Puhakka and Valmari~\cite{Puhakka2001Liveness} studied this subject for CFFD in a general LTS composition framework.
Thereby, they limited the notions of fairness in order to obtain a usable abstraction and verification framework.
Stronger equivalences as the ones we identified in this paper have not been considered so far.
Other works are also concerned directly or indirectly with different fairness notions in process algebra settings~\cite{Costa1984Fair,Corradini2006Fairness,Francez1986}, \eg Corradini \etal~\cite{Corradini2006Fairness} consider fairness of actions.
They obtain a compositional semantics for the process language PAFAS, incorporating a TCSP-parallel operator and ensuring weak fairness by forcing each enabled action to happen eventually.

\bibliographystyle{eptcs}

\end{document}